\DeclareMathOperator*{\argmin}{arg\,min}
\DeclareMathOperator*{\argmax}{arg\,max}
\newcommand{\myitem}{{\setlength{\parindent}{8pt}\footnotesize{$\bullet~$ }}}
\newtheorem{lemma}{Lemma}
\begin{document}
%
\title{Energy Efficient Medium Access with Interference Mitigation in LTE Femtocell Networks}

\author{\IEEEauthorblockN{Ying Wang, Xiangming Dai, Jason Min Wang, Brahim Bensaou}
\IEEEauthorblockA{Department of Computer Science and Engineering, The Hong Kong University of Science and Technology\\
ywangbf@cse.ust.hk, xdai@cse.ust.hk, jasonwangm@cse.ust.hk, brahim@cse.ust.hk } }

\maketitle

\begin{abstract}

With the rapidly increasing number of deployed LTE femtocell base stations (FBS), energy consumption of femtocell networks has become a serious environmental issue. Therefore, energy-efficient protocols are needed to balance the trade-off between energy saving and bandwidth utilization. The key component of the desired protocol to prevent both energy and bandwidth waste is interference mitigation, which, nevertheless, most previous work has failed to properly consider. To this end, in this paper, we manipulate user equipment (UE) association and OFDMA scheduling with a combination of interference mitigation. Recognizing the NP-hardness of the problem, we propose two distributed algorithms with guaranteed convergence. Extensive simulations show that our algorithms outperform the alternative algorithms in multiple metrics such as utility, power consumption, and convergence speed.
\end{abstract}


\IEEEpeerreviewmaketitle

\section{Introduction}

Femtocell base stations (FBSs) have gained recently a great popularity as an effective and economical approach to increasing the capacity of the LTE macro-cellular networks. In a recent report, the Smallcell Forum, predicts the number of deployed FBSs to reach a sharp 24-fold increase (from 2.5 millions in 2012 to 59 millions by 2016) \cite{Informa2013}. Driven by the need of mobile network operators to cut operational expenditure and the pressure to reduce industrial carbon footprints, in the design of new FBSs hardware, a great care has been given to the energy efficiency. As such, an individual FBS per se consumes, relatively, very little power, however, the consumption of an entire large scale network of FBSs add up to a  huge amounts. By 2016, the power consumption of all FBSs will reach $2.2\times 10^{11}$ KWH, which amounts to producing several millions of tonnes of carbon dioxide per year \cite{Informa2013}.

Reducing power consumption often means switching off as many FBSs as possible \cite{conte2011cell}; yet, doing it in a naive way will inevitably lead to perceptible performance degradation for end users, which goes against the initial rationale of deploying FBSs. Therefore, any feasible energy-saving design for the femtocell networks must bear in mind the quality of service (QoS) of end users as well. In LTE, QoS is implemented between user equipment (UE) and PDN Gateway\footnote{The PDN gateway of packet data network gateway in the LTE jargon is the gateway to wired networks such as the Internet.}, and is characterized by a type of \textsl{bearers} each characterized with some levels of guarantee on some metrics such as delay, loss or throughput. 
In femtocell networks, the guarantee of QoS in mainly focused on the wireless link between the UE and the FBS, also known as the ``\textsl{radio bearer}''. Currently two types of bearers exist: guaranteed bit rate (GBR) and non-guaranteed bit rate (Non-GBR). GBR bearers are applied to real-time services such as VoIP and online gaming, while Non-GBR bearers are often applied to elastic traffic such as web browsing and video streaming.


There has been a wide spectrum of models of femtocell networks that target energy efficiency as their primary goal. These models in general fail to take into account the following three aspects together: i) the QoS requirement of the radio bearer; ii) the impact of interference between FBSs; and iii) the gap between the short time scale of orthogonal frequency division multiple access (OFDMA) scheduling and the relatively larger latency for X-2 interface-based coordination between FBSs. For example, FBSs are switched off by manipulating UE association to reduce power consumption in \cite{conte2011cell,zhou2009green,peng2014greenbsn,son2011base}, where the impact of interference is missing. In addition, the QoS requirement is ignored in \cite{conte2011cell,peng2014greenbsn}, while only real-time traffic is considered in \cite{zhou2009green,son2011base}. Both spectrum and energy efficiency are achieved in \cite{hou2013energy} via a joint optimization of resource block scheduling, power allocation, and UE association; however the model is limited to elastic traffic only. Moreover, the distributed protocol proposed in \cite{hou2013energy} imposes severe challenge on the latency requirement of the X-2 interface. Similarly, the centralized mechanism proposed in \cite{saker2012optimal} is inconsistent with the self-organizing nature of femtocell networks. In addition, several designs proposed to selectively activate the sleep-mode of FBSs \cite{claussen2010dynamic}, but the need for hardware modification makes them not immediately practical. In \cite{dufkova2011energy}, the energy efficiency of macrocell, microcell, and femtocell are studied from the perspective of future deployment without giving any concrete solution for existing network.

In this paper, we study the problem of achieving both bandwidth efficiency and energy efficiency in the femtocell network, while taking into account the user perceived quality of service. We propose a generic model that incorporates the key characteristics of the LTE femtocell network: i) using OFDMA; and ii) considering the multi-cell multi-link interference. The gross data rates of UEs represent bandwidth efficiency. The energy efficiency is characterized by the power consumption of both wireless transmission and the operation of base stations. In essence, maximizing the bandwidth utilization is at odds with minimizing the gross energy consumption. We introduce a weight parameter in the model to strike a balance between these two conflicting goals. The fundamental part of our model lies in the joint optimization of UE association and OFDMA scheduling. The UE association determines the sleep/active states of FBSs, which directly influences the power consumption. We also propose a practical MAC protocol on top of OFDMA to allocate each OFDMA tile (the smallest allocation unit in OFDMA) to UEs in a probabilistic manner minimizing thus the communication overhead over the X-2 interface.


In summary, this paper's contributions are as follows:

\myitem To the best of our knowledge, this paper is the first to study the problem of energy efficiency  with interference management in LTE femtocell networks with heterogeneous QoS requirements.

\myitem We propose two greedy algorithms that can be implemented in a practical MAC protocol resulting in a low X-2 coordination traffic.

\myitem We evaluate our methods via extensive simulation experiments to demonstrate that our algorithms achieve considerable improvements over some benchmark algorithms from the literature.

The rest of the paper is organized as follows. We present the system model in Sec.~\ref{model} and describe two algorithms in Sec.~\ref{algorithm}. We give the performance evaluation in Sec.~\ref{simulation} and conclude the paper in Sec.~\ref{conclution}.

\section{System Model}\label{model}

In this section, we first introduce briefly the femtocell network scenario and clarify the concepts of association control and OFDMA scheduling then finally describe the problem formulation.

\noindent\textbf{Network scenario.} We consider a number of FBSs within a range of a macrocell base station (MBS). The set of FBSs is defined as $B$ and we only consider the downlink transmission. FBSs are randomly deployed by residential users and thus their coverage areas may mutually partially overlap. Let $U$ be the set of all UEs that belong to any owner of FBS. Each FBS is assumed to adopt open-access to all the UEs in $U$. UEs not in $U$ or outside the service range of the FBSs are served directly by the MBS. To avoid the MBS-to-FBS interference, the MBS occupies a different bandwidth frequency from that shared by all FBSs. We partition $U$ into two disjoint sets\footnotemark $U_G$ and $U_N$ to reflect the QoS requirement of GBR bearers and Non-GBR bearers respectively. Each UE $u\in U_G$ is guaranteed to achieve a rate $d_u$, while the rate of a UE $u\in U_N$ is upper bounded by $c_u$ \cite{sesia2009lte}.
\footnotetext{Here, we assume each UE is associated with one bearer. To deal with the case of multiple bearers per UE, the same amount of virtual UEs can be created to fit in our model.}

\noindent\textbf{Association control.} Each UE $u$ can be served by anyone among a set of multiple eligible FBSs. Denote by $N_b$ the set of UEs within the range of FBS $b$. The basic observation to achieving energy efficiency lies in determining the association of UEs, that favours FBSs with low interference and abundant residual bandwidth. FBSs with no UE association will be put into sleep mode. Let $I_{b,u}\in \{0, 1\}$ be the decision variable that indicates whether UE $u$ is associated with FBS $b$. 

\noindent\textbf{OFDMA scheduling.} The interference is coupled with not only the UE association but also the MAC layer resource allocation. We call the smallest allocation unit in OFDMA a ``tile''. Previous approaches \cite{hou2013energy} to scheduling OFDMA tiles rely on the coordination between FBSs over the X-2 interface with milliseconds latency and often incur high computational overhead as well. To achieve low coordination and computational overheads, we propose a probabilistic channel access protocol, which is similar to the one in \cite{hou2011proportionally}. In this protocol, each FBS will allocate each OFDMA tile to its associated UEs probabilistically according to a pre-determined distribution, which shall be returned by our model. Let $p_u$ be the channel access probability of UE $u$ for each title, $0\leq p_u\leq 1$. In fact, we can view this protocol as an enhanced variant of the slotted ALOHA by replacing the time slot as the OFDMA tile and avoiding collisions intra-FBS. Fig.~\ref{fig:1} gives a simple scenario of UE association and the randomized OFDMA scheduling.

\begin{figure}[t]
\centering
\includegraphics[width=0.8\columnwidth]{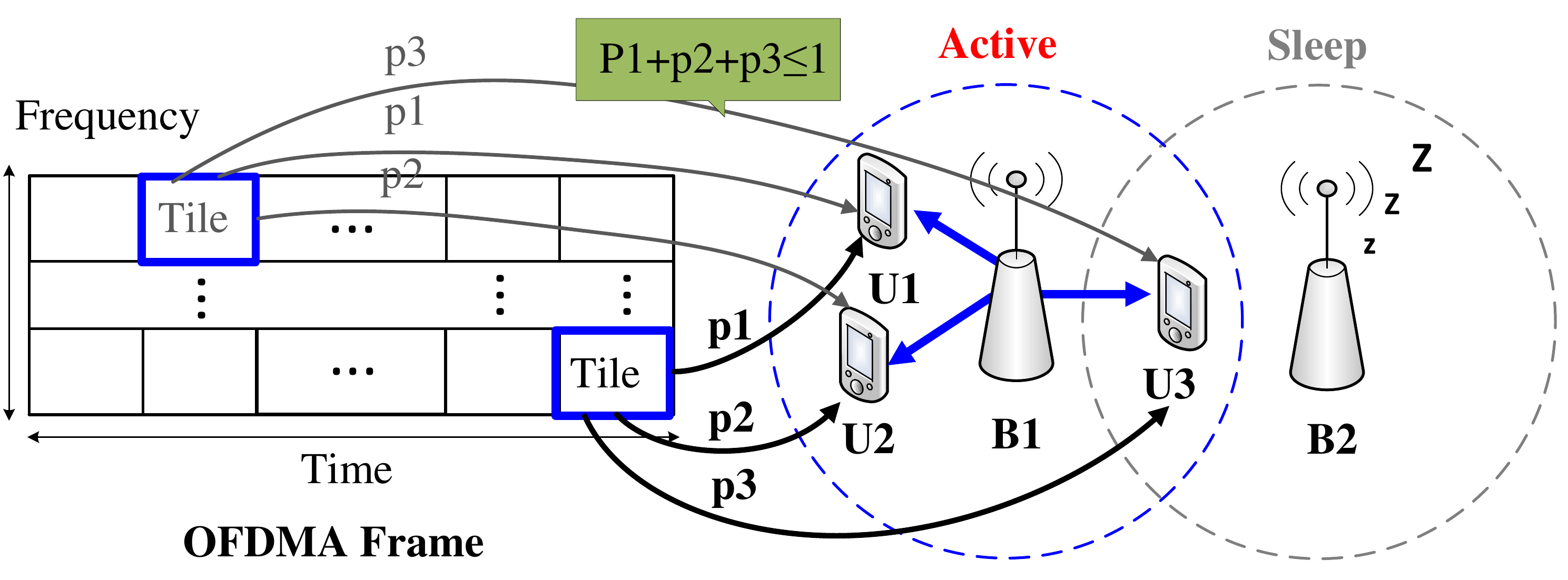}
\caption{Illustration of user association and OFDMA scheduling: (1) all three UEs are associated with B1 and thus B2 will be put into sleep; (2) every OFDMA tile will be scheduled by B1 to the UEs probabilistically based on $\{p_1, p_2, p_3\}$; (3) Intra-FBS, collisions are virtual only.}
\label{fig:1}
\end{figure}

\noindent\textbf{Bandwidth efficiency.} Bandwidth efficiency is characterized by the total throughput of the femtocell network. Let $r_u$ be the data rate achieved by UE $u$. We use utility functions $\mathcal{F}_G (r_u,d_u)$ and $\mathcal{F}_N (r_u,c_u)$ to reflect the satisfaction of a GBR bearer and a Non-GBR bearer UE respectively. $\mathcal{F}_G (r_u,d_u)$ is defined:
\begin{equation}
\mathcal{F}_G (r_u,d_u) = 
\begin{cases}
0,r_u < d_u \\
\mathcal{C}_1, r_u \geq d_u
\end{cases}
\label{eq:1}
\end{equation}
where $d_u$ is the guaranteed data rate of GBR bearer and $\mathcal{C}_1$ is a constant parameter. $\mathcal{F}_G (r_u,d_u)$ is a step function that conveys the strict data rate requirement of GBR bearer. $\mathcal{F}_N (r_u,c_u)$ is defined as:
\begin{equation}
\mathcal{F}_N (r_u,c_u) = 
\begin{cases}
\frac{\mathcal{C}_2 }{log(c_u+1)}log(r_u+1) ,r_u < c_u \\
\mathcal{C}_2, r_u \geq c_u
\end{cases}
\end{equation}
where to comply with the standard, $c_u$ limits the data rate that can be achieved by UEs with Non-GBR bearer and $\mathcal{C}_2$ is also a constant parameter. $\mathcal{F}_N (r_u,c_u)$ implements this in a soft manner, i.e., with zero increase in utility when $r_u\geq c_u$. In addition, $\mathcal{F}_N (r_u,c_u)$ uses $\log(\cdot)$ to apply proportional fairness, which is suitable for elastic traffic. It is worth noting that to enforce higher priorities of GBR-bearer UEs over Non-GBR-bearer UEs, we set $\mathcal{C}_1 > \mathcal{C}_2$.

The throughput of UE $u$, $r_u$, is calculated as:
\begin{equation}
r_u = Rp_u \prod\limits_{
\begin{subarray}{c}
b: u\in N_b,\\
 \ I_{b,u} = 0
\end{subarray}}(1-P_b),	u \in U
\end{equation}
where $P_b = \sum_{u \in N_b} p_u I_{b,u}$ is the gross transmission probability of UEs associated with FBS $b$, $r_u$ is the product of the nominal data rate $R$ and the expected successful transmission probability, which is equal to the product of access probability $p_u$ and the probability that no other contender FBS is transmitting. 

\noindent\textbf{Energy efficiency.} Energy efficiency is characterized by the total power consumption of FBSs. The power consumption of each FBS $E_b$ includes three parts: 1) constant power $E_1$ for transceiver idling in both active and sleeping mode; 2) extra constant power (for computation, backhaul communication, and power supply) $E_2$ in active mode; 3) the transmission power, which is proportional to $P_b$ and the maximum transmission power $E_3$. 
\begin{equation}
 E_b = E_1+ E_2 M_b + E_3 P_b, b\in B
 \end{equation}
where $M_b =  \max_{u\in U}\{I_{b,u}\}$ indicating whether FBS $b$ is in sleeping mode or otherwise (i.e., if at least one UE is associated with FBS $b$ then it must be in active mode).

\noindent\textbf{Problem formulation.} As mentioned before, we use a positive weight $\omega$ to strike a balance between bandwidth efficiency and energy efficiency. Thus the problem is formulated as:
\begin{subequations}
\begin{align}
\max\limits_{I_{b,u}, p_u} & \sum\limits_{u \in U_G} \mathcal{F}_G (r_u,d_u) + \!\sum\limits_{u \in U_N} \mathcal{F}_N (r_u,c_u) - \omega \sum\limits_{b \in B}E_b \label{eq:obj}\\
&  \sum\limits_{b: u\in N_b}I_{b,u} = 1, \forall u\in U \label{eq:c1}\\
& P_b \leq 1, \forall b\in B \label{eq:c2}
\end{align}
\label{eq:origin}
\end{subequations}\\
where constraint (\ref{eq:c1}) enforces the requirement that each UE can be associated with one and only one FBS; and constraint (\ref{eq:c2}) states that for each FBS, each OFDMA tile can be assigned to at most one UE.


\noindent\textbf{NP-hardness.} We consider a special case of problem (\ref{eq:origin}) where: i) each FBS provides full data rate $R$ without any inter-cell interference; ii) only UEs with GBR-bearer exist and FBSs are able to serve UEs with full guaranteed rates. In this case, only the energy part is left in the objective function (\ref{eq:obj}). By setting $E_3 \ll E_2$, all active FBSs are assumed to consume the same amount of power. Now, this problem becomes packing UEs with different demands into a minimum number of FBSs that have the capacity $R$, which is a bin-packing problem. Therefore, the NP-hardness of problem (\ref{eq:origin}) is established by the NP-hardness of this special case.

\section{Proposed Algorithms}
\label{algorithm}


In the context of femtocell networks, we seek feasible efficient distributed algorithms to solve problem (\ref{eq:origin}) approximately. We first discretize the continuous decision variable $\{p_u\}$ in order to reduce the solution search space. The discretized variable of $p_u$ is represented as $q_u$ where $q_u \in \mathcal{Q} = \{ 0, \frac{1}{n-1},\ldots, \frac{n-2}{n-1}, 1\}$. Let $s_u$ be the decision vector for each UE $u$: $s_u =(I_{1,u},...,I_{b,u},...,I_{|B|,u}, q_u )$. The feasible set of $s_u$ is defined as $\mathcal{S}_u = \{s_u|\sum_{b: u\in N_b}I_{b,u} = 1, I_{b,u} = \{ 0,1\}, q_u \in \mathcal{Q} \}$. Note that the constraint (\ref{eq:c1}) has been incorporated into $\mathcal{S}_u$. Then, a discretized variant of problem (\ref{eq:origin}) can be written as:
\begin{equation}
\begin{array}{ll}
\max\limits_{\{s_u\}} & \sum\limits_{u \in U_G} \mathcal{F}_G (r_u,d_u) + \sum\limits_{u \in U_N} \mathcal{F}_N (r_u,c_u) - \omega \sum\limits_{b \in B}E_b \\
& + \sum\limits_{b \in B}\mathcal{G}(P_b-1)\\
\mbox{s.t. } & s_u \in \mathcal{S}_u, \forall u\in U
\end{array}
\label{eq:newobj}
\end{equation}\\
where the function $\mathcal{G}(x)$ is defined as
\begin{equation}
\mathcal{G}(x) = 
\begin{cases}
0,x \leq 0 \\
-\mathcal{C}_3x, x > 0.
\end{cases}
\end{equation}\\
$\mathcal{C}_3$ is a very large positive constant, as a penalty for violating constraint (\ref{eq:c2}). Note that the term $\mathcal{G}(P_b - 1)$ in the objective function of (\ref{eq:newobj}) implicitly enforces constraint (\ref{eq:c2}) in the original problem (\ref{eq:origin}).





Next, we propose two distributed algorithms to obtain approximate solutions to the transformed problem (\ref{eq:newobj}): \textit{iterative greedy algorithm} and \textit{fast iterative greedy algorithm}.

\subsection{Iterative Greedy (IG) Algorithm}


The IG algorithm proceeds in an iterative manner. In each iteration, one UE is selected to make a local optimal decision, i.e., maximizing its local profit $\theta_u$, while states of other UEs remain the same as the last iteration. Let $\alpha_u = \{\cup_{b}N_b | u\in N_b\}$ (where $u\in \alpha_u$). Clearly, the local decision of $u$ can influence utilities of other UEs in $\alpha_u$ and the power consumption of FBSs in $N_b$. We incorporate these concerns into $\theta_u$ so that maximizing local profit $\theta_u$ will always lead to an increase of the global profit (i.e., the objective function of (\ref{eq:newobj})), which shall be proved in Lemma~\ref{lemma1}. Deliberately, $\theta_u(s_u)$ is defined as 
\begin{equation}
\begin{split}
\theta_u(s_u) = & \sum\limits_{u\in \alpha_u \cap U_G} \mathcal{F}_G (r_u,d_u) + \sum\limits_{u\in \alpha_u \cap U_N} \mathcal{F}_N (r_u,c_u)\\ & -\omega \sum\limits_{b:u\in N_b} E_b 
+ \sum\limits_{b: u\in N_b}\mathcal{G}(P_b-1),
\end{split}
\label{theta}
\end{equation}
which includes three parts: i) the sum of utilities of neighbour UEs $\alpha_u$; ii) the sum of power consumption of FBSs that are eligible to serve $u$ (i.e., $u\in N_b$); and iii) the penalty part to enforce constraints (\ref{eq:c2}) coupled with $u$. In order to calculate the power and penalty parts, the decisions of UEs in $\alpha_u\backslash \{u\}$, i.e., $\mathbf{s}_{\alpha_u\backslash \{u\}}$, must be known. However, since each UE's utility is influenced by the decisions of its neighbour UEs, to calculate the utilities of UEs in $u^{\prime}\in \alpha_u\backslash \{u\}$, we must know the decisions of UEs $u^{*}\in \alpha_{u^{\prime}}$ as well, i.e., $s_{u^{*}}$. For notational convenience, we define $\alpha^{(2)}_u = \{\cup \alpha_{u^{\prime}}| {u^{\prime} \in \alpha_u} \}$ (where $u\in \alpha^{(2)}_u$) as the two-tier neighbour set of UE $u$. Then, in $t$-th iteration, maximizing the local profit $\theta_u$ amounts to finding ${s_u^{(t)}} = \argmax_{s_u \in \mathcal{S}_u}\theta_u(s_u, \boldsymbol{s}_{\alpha^{(2)}_u \backslash \{u\}}^{(t-1)})$, where $\boldsymbol{s}_{\alpha^{(2)}_u \backslash \{u\}}^{(t-1)}$ is the union of decisions of UEs $\alpha^{(2)}_u \backslash \{u\}$ in $(t-1)$-th iteration.

\begin{algorithm}[h]
\caption{Iterative Greedy Algorithm}
\label{IG}
Set initial decision $s_u^{(1)} = \boldsymbol{0}$; $t \gets 1$;\\
\Repeat{Convergence}{
Obtain $\boldsymbol{s}_{\alpha^{(2)}_u \backslash \{u\}}^{(t-1)}$ of set $\alpha^{(2)}_u\slash\{u\}$ in iteration $(t-1)$;\\
\If{$u$ is selected}{	
	Update decision: $s_u^{t} = \argmax\limits_{s_u \in \mathcal{S}_u}\theta_u(s_u,\boldsymbol{s}_{\alpha^{(2)}_u \backslash \{u\}}^{(t-1)})$;\\
	Notify decision $s_u^{t}$ to UEs in $\alpha^{(2)}_u\slash\{u\}$;\\
}
$t\gets t+1$;
}	
\end{algorithm}

The IG algorithm is summarized in Alg.~\ref{IG}. All UEs are allocated with initial decisions in the first iteration. The complexity of IG algorithm in each iteration is $O(n|B|)$, where $n$ is the size of the discrete set $\mathcal{Q}$, and $|B|$ is the number of FBSs.

Alg.~\ref{IG} is an iterative algorithm and we prove its convergence via Lemma~\ref{lemma1}.
\begin{lemma} 
Alg.~\ref{IG} generates an improvement path of a potential game and thus converges to an equilibrium in a finite number of steps.
\label{lemma1}
\end{lemma}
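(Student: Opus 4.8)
The plan is to recognize the discretized problem~\eqref{eq:newobj} as an \emph{exact potential game} whose potential function is exactly the objective being maximized, and then to appeal to the finite improvement property of such games. I would take the players to be the UEs, the strategy set of player $u$ to be the finite set $\mathcal{S}_u$, the payoff of $u$ to be the local profit $\theta_u$ of~\eqref{theta}, and the potential function $\Phi(\{s_u\})$ to be the global objective of~\eqref{eq:newobj}, i.e.
\[
\Phi = \!\sum_{u \in U_G}\! \mathcal{F}_G(r_u, d_u) + \!\sum_{u \in U_N}\! \mathcal{F}_N(r_u, c_u) - \omega \!\sum_{b \in B}\! E_b + \!\sum_{b \in B}\! \mathcal{G}(P_b - 1).
\]

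The central step is to establish the exact-potential identity: for every UE $u$ and every unilateral change of its decision from $s_u$ to $s_u'$ with all other decisions $\boldsymbol{s}_{-u}$ fixed,
\[
\theta_u(s_u', \boldsymbol{s}_{-u}) - \theta_u(s_u, \boldsymbol{s}_{-u}) = \Phi(s_u', \boldsymbol{s}_{-u}) - \Phi(s_u, \boldsymbol{s}_{-u}).
\]
This amounts to showing that $\theta_u$ collects precisely those summands of $\Phi$ that can be perturbed by $u$'s decision, so that all other summands cancel. I would argue this from the model structure: by the throughput expression for $r_{u'}$, a UE $u'$ sees $u$'s decision only through the aggregate probabilities $P_b$ of FBSs $b$ with $u \in N_b$, and such a $P_b$ enters $r_{u'}$ only when $u' \in N_b$ as well---that is, only for $u' \in \alpha_u$. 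Similarly, $u$'s decision can change $E_b$ and $\mathcal{G}(P_b - 1)$ only for FBSs $b$ with $u \in N_b$, through the active/sleep indicator $M_b$ and through $P_b$. Since $\theta_u$ sums utilities over $\alpha_u$ and power and penalty terms over $\{b : u \in N_b\}$, it contains every term of $\Phi$ that $u$ can affect and nothing else varies with $s_u$; the two evaluations therefore differ by the same amount in $\theta_u$ and in $\Phi$.

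Granting the identity, convergence is immediate. In each iteration the selected UE replaces its decision by a maximizer of $\theta_u$ over $\mathcal{S}_u$ given the current neighbor decisions, so $\theta_u$ weakly increases and, by the identity, $\Phi$ weakly increases; the increase is strict whenever the UE actually changes its decision. Hence the profiles generated by Alg.~\ref{IG} form an improvement path of the potential game. Because each $\mathcal{S}_u$ is finite---the discretization confines $q_u$ to the $n$-point set $\mathcal{Q}$ and the association variables are binary---the joint strategy space is finite and $\Phi$ attains only finitely many values. A sequence along which $\Phi$ is non-decreasing and strictly increases at every genuine move can make only finitely many genuine moves, so the algorithm terminates; at termination no selected UE can raise its $\theta_u$, which is exactly the equilibrium condition.

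The hard part will be the exact-potential verification, specifically the bookkeeping that no term of $\Phi$ lying outside $\theta_u$ depends on $s_u$. The delicate feature is the interference coupling: I must confirm that altering $u$'s association or access probability cannot disturb the throughput of any UE outside $\alpha_u$, and that the two-tier set $\alpha^{(2)}_u$---needed only to \emph{evaluate} $\theta_u$---does not widen the set of terms a move by $u$ can change. The association variable deserves particular care, since reassociating $u$ simultaneously deletes it from one $P_b$ and inserts it into another and may flip the corresponding $M_b$; I would check that all of these effects remain confined to FBSs containing $u$ in range, which is what keeps the affected utilities within $\alpha_u$.
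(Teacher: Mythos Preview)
Your proposal is correct and follows essentially the same route as the paper: cast the discretized problem as an exact potential game with the UEs as players, $\theta_u$ as payoffs, and the global objective as the potential, then invoke the finite improvement property. In fact you go further than the paper, which asserts the exact-potential identity ``obviously'' without the term-by-term bookkeeping you outline; your verification that $u$'s decision affects only utilities of UEs in $\alpha_u$ and power/penalty terms of FBSs with $u\in N_b$ is precisely what is needed to justify that step.
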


\begin{proof}
We define a strategic game $G=(U,\langle \mathcal{S}_u \rangle,\langle\theta_u\rangle)$, where each UE $u$ is a player with strategy $s_u \in \mathcal{S}_u$, and its payoff function is $\theta_u$. Next, we are going to show that the objective function in (\ref{eq:newobj}), denoted as $\mathcal{P}$, is a potential function. For any UE $u\in U$ and for any two strategies $s^{\prime}_u, s_u\in \mathcal{S}_u$, the following property holds obviously,
\begin{equation}
\label{potential}
\begin{split}
& \mathcal{P}(s^{\prime}_u,\boldsymbol{s}_{U\backslash\{u\}})- \mathcal{P}(s_u,\boldsymbol{s}_{U\backslash\{u\}}) = \\
& \theta_u(s^{\prime}_u,\boldsymbol{s}_{\alpha_u^{(2)}\backslash\{u\}}) - \theta_u(s_u,\boldsymbol{s}_{\alpha_u^{(2)}\backslash\{u\}}), \forall s_u \in \mathcal{S}_u, \forall s^{\prime}_u\in \mathcal{S}_u
\end{split}
\end{equation}\\
With the property \eqref{potential}, game $G$ is an exact potential game \cite{monderer1996potential}. In each iteration of Alg.~\ref{IG}, only one UE $u$ is allowed to update its decision $s_u^*$, which maximizes its payoff $\theta_u$. Therefore, $\theta_u(s_u^*, \boldsymbol{s}_{\alpha_u^{(2)}\backslash\{u\}}) \geq \theta_u(s_u^{\prime}, \boldsymbol{s}_{\alpha_u^{(2)}\backslash\{u\}}), \forall s_u^{\prime}\in \mathcal{S}_u$. Let $ (\boldsymbol{s}_1,\boldsymbol{s}_2,\boldsymbol{s}_3,\cdots)$ be any decision path generated by Alg.~\ref{IG}. According to (\ref{potential}), $\mathcal{P}(\boldsymbol{s}_1) \leq \mathcal{P}(\boldsymbol{s}_2) \leq \mathcal{P}(\boldsymbol{s}_3)\leq\cdots$. That is, there is always improvement on the potential function along this decision path. Since the game has a finite strategy space and each steps improves the potential, Alg.~\ref{IG} will converge in finite steps.
\end{proof}

\subsection{Fast Iterative Greedy (FIG) Algorithm }

One shortcoming of IG algorithm is that only one UE is allowed to update its decision in each iteration and thus Alg.~\ref{IG} may take a long time to converge, impairing the network efficiency. If two UEs $u, u^{\prime}$ are not 2-tier neighbours, i.e., $u^{\prime}\notin \alpha_u^{(2)}$, then the decision of $u^{\prime}$ will not impact $\theta_u$ and vice versa. As a consequence, it is beneficial to allow $u$ and $u^{\prime}$ to update their decisions simultaneously. Based on this observation, we propose a fast iterative greedy algorithm (FIG), which allows as many UEs as possible to update their local decisions and thus will converge faster. To this end, we invoke a distributed colouring algorithm, shown in Alg.~\ref{DC}, which greedily partitions UEs into maximal independent sets \cite{barenboim2013distributed}. The colouring procedure runs over the UE graph, where each vertex corresponds to a UE and an edge exists between two UEs if they are 2-tier neighbours to each other. Initially, each UE is coloured with its UE index and thus $|U|$ colors are used. Then in each iteration, every UE obtains the colors of its 2-tier neighbouring UEs. The UEs assigned with the largest color index seek to reduce their color index by assigning themselves with the minimum possible color indices, while ensuring the assigned colour does not collide with the colours of the neighbours. The colouring algorithm terminates when no UEs can reduce its color index further. The complexity of Alg.~\ref{DC} is $O(\Delta^2|U|)$, where $\Delta$ is the maximum degree of the UE graph and $|U|$ is the number of UEs.

\begin{algorithm}[h]
\caption{Distributed Coloring Algorithm}
\label{DC}
$u.color = u.index$; $t\gets |U|$;\\
\While{$t>0$}{
	Obtain colors of set $\alpha^{(2)}_u\slash\{u\}$ in $(t-1)$th iteration; \\
	\If{$ u.color == t $}{
		$k^*= \argmin \{k\neq u^{\prime}.color,k=1,...,|U|,u^{\prime}\in \alpha^{(2)}_u\slash\{u\}$\};\\
		\If{$ u.color > k^* $}{
			$u.color = k^*$;\\
			Notify $u.color$ to UEs in $\alpha^{(2)}_u\slash\{u\}$;
		}
		\Else{\Return $u.color$;}
		}
		$t\gets t - 1$;
}
\end{algorithm}

\begin{lemma} 
The distributed coloring algorithm can reduce the color number to at most $(\Delta+1)$ where $\Delta$ is the maximum degree of the UE graph.
\end{lemma}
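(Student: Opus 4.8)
The plan is to combine a pigeonhole bound on the color chosen at each recoloring step with a monotonicity argument on how the counter $t$ sweeps down through the color values, supported throughout by an invariant that the coloring stays proper. I will read ``color number'' as the largest color index still in use and show it is at most $\Delta+1$ when Alg.~\ref{DC} terminates; since a proper coloring with indices in $\{1,\dots,\Delta+1\}$ uses at most $\Delta+1$ distinct colors, this also bounds the number of colors.

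First I would establish the key local bound. Whenever a UE $u$ recolors, it sets its color to $k^{*}=\min\{\,k : k\neq u^{\prime}.color,\; u^{\prime}\in\alpha^{(2)}_u\setminus\{u\}\,\}$. Because $u$ has at most $\Delta$ two-tier neighbours (the degree of $u$ in the UE graph is at most $\Delta$), these neighbours occupy at most $\Delta$ distinct color values; hence among the $\Delta+1$ candidate values $\{1,\dots,\Delta+1\}$ at least one is unblocked, so $k^{*}\le\Delta+1$. Consequently every color ever produced by a recoloring step lies in $\{1,\dots,\Delta+1\}$. Next I would argue that every UE whose initial color exceeds $\Delta+1$ is eventually recolored. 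A UE changes its color only when the counter equals its current color, and the guard $u.color>k^{*}$ means colors are non-increasing. Therefore any UE still carrying a color $t>\Delta+1$ must be one that has never recolored, i.e.\ it still holds its original index $t$. As the counter descends through $t=|U|,|U|-1,\dots,\Delta+2$, this UE is selected precisely when the counter reaches $t$, and there $k^{*}\le\Delta+1<t=u.color$, forcing a recoloring into $\{1,\dots,\Delta+1\}$; since colors never increase it can never climb back above $\Delta+1$. Thus once the counter has passed $\Delta+2$, no UE carries a color exceeding $\Delta+1$, and termination is immediate because $t$ strictly decreases on every pass.

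The step I expect to require the most care is verifying that the coloring remains proper under the \emph{simultaneous} updates within one iteration, since this is what certifies that the output is a legitimate coloring rather than merely having small indices. I would prove by induction on the iteration index that at the start of each iteration no two 2-tier neighbours share a color. The base case is the initial distinct-index coloring. For the inductive step, the only UEs that move are those with color exactly $t$; by the induction hypothesis any two such UEs are non-adjacent (adjacent equal-colored UEs would contradict properness), so they may choose their new colors independently without creating a conflict between themselves. Moreover each recoloring UE $u$ reads the colors of its neighbours as of the previous iteration, and every neighbour whose color differs from $t$ does not move during this pass, so the snapshot $u$ uses is accurate and $u$ explicitly excludes all those colors; hence no new monochromatic edge is created. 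Combining this properness invariant with the two bounds above yields a proper coloring that uses only indices in $\{1,\dots,\Delta+1\}$, which establishes the claim.
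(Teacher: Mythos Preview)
Your argument is correct and rests on the same pigeonhole observation the paper uses---a vertex has at most $\Delta$ neighbours, so some color in $\{1,\dots,\Delta+1\}$ is always free---but you develop it far more carefully than the paper, which gives only a two-line contradiction sketch. In particular, your inductive properness invariant under simultaneous updates and the monotone sweep argument are sound additions that the paper's proof omits entirely; they are not strictly required to match the paper, but they close real gaps in the original presentation.
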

\begin{proof}
Suppose the colouring algorithm ends with $(\Delta+2)$ colors in the last iteration. A UE coloured with $(\Delta+2)$ can always find a not-used color from color 1 to $(\Delta + 1)$ since it has at most $\Delta$ neighbours. Therefore, we can always find a $(\Delta+1)$-colouring to any graph with a maximum degree $\Delta$.
\end{proof}


FIG proceeds in a similar way as IG except that in each iteration, multiple UEs with the same color will make local decisions simultaneously, whereas in each iteration of IG only one UE can make a local decision. Both FIG and IG will converge to an equilibrium state within limited number of iterations. That is to say, FIG and IG do not yield a global optimum.

\subsection{Implementation Discussion}

In each iteration of IG algorithm, UE $u$ relies on the information of its two-tier neighbours $\boldsymbol{s}_{\alpha_u^{(2)}}$ to make a local optimal decision. However, with no direct communication between UEs, the neighbourhood information ($\alpha_u, \alpha_u^{(2)}$) between UEs is unknown, let alone the ``strategy'' information $\boldsymbol{s}_{\alpha_u^{(2)}}$. In fact, in the real implementation, the FBS can play the role of making local decisions $\{s_u^{(t)}\}$ (in Alg.~\ref{IG}) on behalf of UEs $u\in N_b$. Since one UE $u$ may locate in the range of multiple FBSs, only one dedicated FBS needs to be selected (via certain distributed consensus mechanism at association) for each UE. And the needed information $\boldsymbol{s}_{\alpha_u^{(2)}}$ will be easily obtained via coordination over the X-2 interface between FBSs.



\section{Performance Evaluation}\label{simulation}

\subsection{Benchmark Algorithms}

For comparison purpose, we study the performance of a distributed load-aware (LA) algorithm proposed in \cite{zhou2009green}. To the best of our knowledge, the LA algorithm is the closest to our work except that it assumes interference is solved in advanced by other techniques. To ensure fair comparison, the interference mitigation method for OFDMA random access in \cite{hou2011proportionally} (similar as the one in our model) was added on top of the original LA algorithm, with differentiation of priorities of GBR and Non-GBR UEs.

We also propose to use a simulated annealing (SA) algorithm as an upper bound benchmark \cite{borst2014nonconcave}. In SA, each UE updates its decision variables iteratively according to a probability distribution calculated based on a local function given in \eqref{theta}. The SA algorithm converges to global optimal probabilistically when time goes to infinity.

\subsection{Simulation Setup}

We evaluate the performance of the IG algorithm, the FIG algorithm, the SA algorithm, and the LA algorithm using numerical simulation. The simulation setup is as follows.

We focus on downlink transmissions. The transmission range of each FBS is set to 10 meters. Each FBS provides a full rate of 100 Mbps when no collision happens. There are in total $10$ levels of discrete channel access probabilities ($n=10$). For each GBR UE, the rate requirement $d_u$ equals $10$ Mbps, and for each Non-GBR UE, the maximum rate limitation $c_u$ equals 20 Mbps. Each GBR UE's maximum utility $\mathcal{C}_1$ is set as $100$, and each Non-GBR UE's maximum utility $\mathcal{C}_2$ is set as $10$, ensuring thus a higher priorities for GBR UEs over Non-GBR UEs. Constant $\mathcal{C}_3$ is set to $100$. According to the FBS hardware power consumption in \cite{ashraf2011sleep}, the constant power $E_1$ is set to $0.7$ Watt (20\% of the transceiver power), the extra constant power $E_2$ is set to $6.7$ Watt, and the maximum proportional transmission power $E_3$ is set to $2.7$ Watt. Then, the power consumption of an active FBS can reach at most $10.2$ Watt. 

\subsection{Convergence Analysis}

We first examine the algorithm convergence property and performance on a simple topology with three FBSs and eleven UEs (numbered from 1 to 11). As shown in Fig. \ref{ThreeTopology}, UE 1, 3, 6, 9, and 10 are GBR UEs and the others are Non-GBR UEs. The power weight $\omega$ is set to $1.0$ for algorithm IG, FIG, and SA. 

\begin{figure}[!htbp]
\centering
\subfigure[Simple Topology]
{\label{ThreeTopology}\includegraphics[width=0.5\columnwidth]{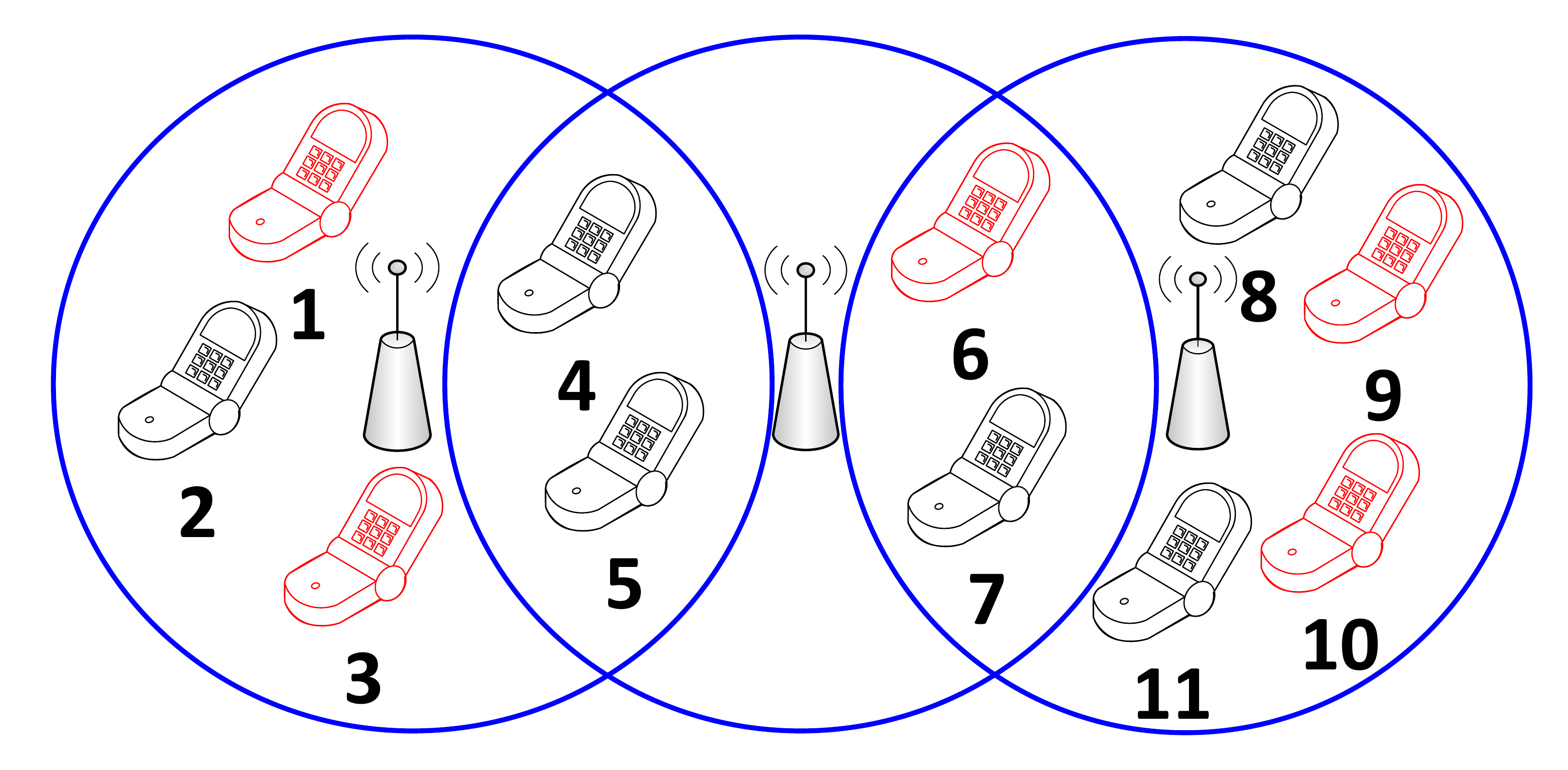}}
\subfigure[Large Topology]
{\label{LargeTopology}\includegraphics[width=0.40\columnwidth]{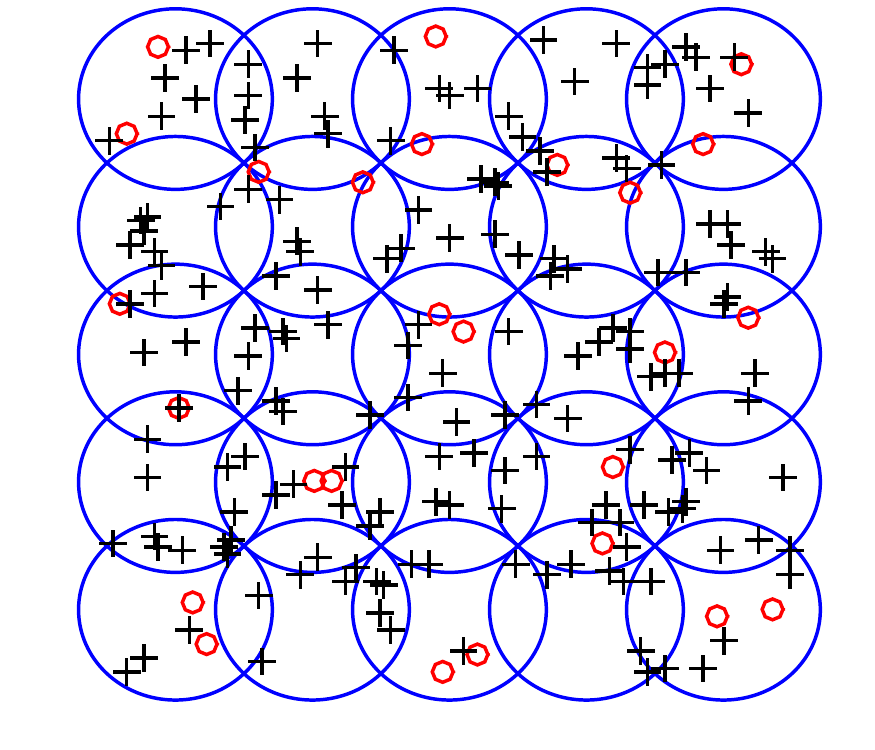}}
\caption{Topologies (The red nodes are GBR UEs, and the black nodes are Non-GBR UEs)}
\end{figure}

We investigate the convergence of three performance metrics: utility, power consumption, and energy efficiency, where utility is the sum of utilities of both GBR and Non-GBR UEs, power consumption is the total power consumption of the network, and energy efficiency is the total UE utility divided by the total network power consumption (or unit of utility per unit of power consumed).  

Figure \ref{Convergence} shows the convergence curves of the above three metrics. The samples are averaged over 100 independent simulation runs with the same initial UE decision (no association and zero access probability). 

\begin{figure}[!htbp]
\centering
\subfigure[Utility]{\label{SimpleUtility}\includegraphics[width=0.48\columnwidth]{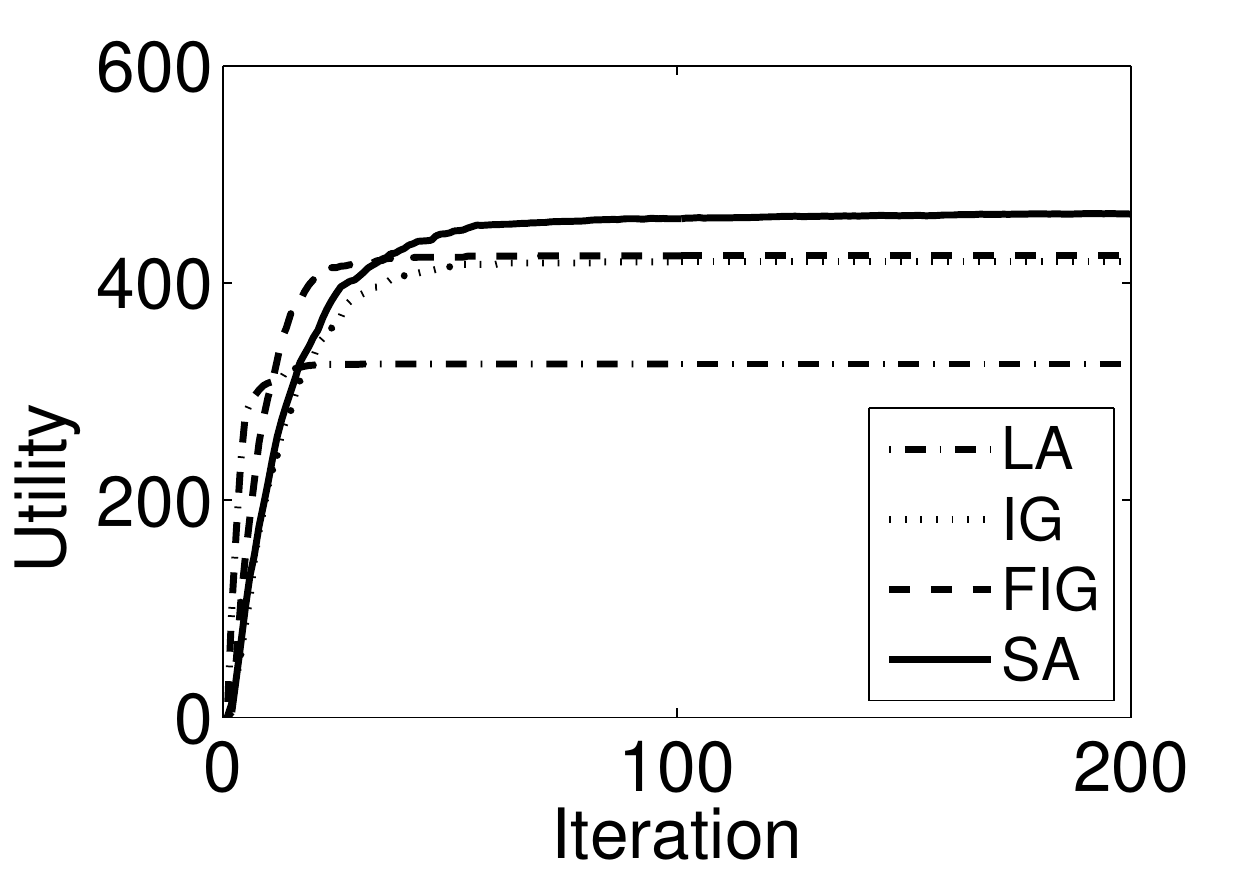}}
\subfigure[Power Consumption]{\label{SimplePower}\includegraphics[width=0.48\columnwidth]{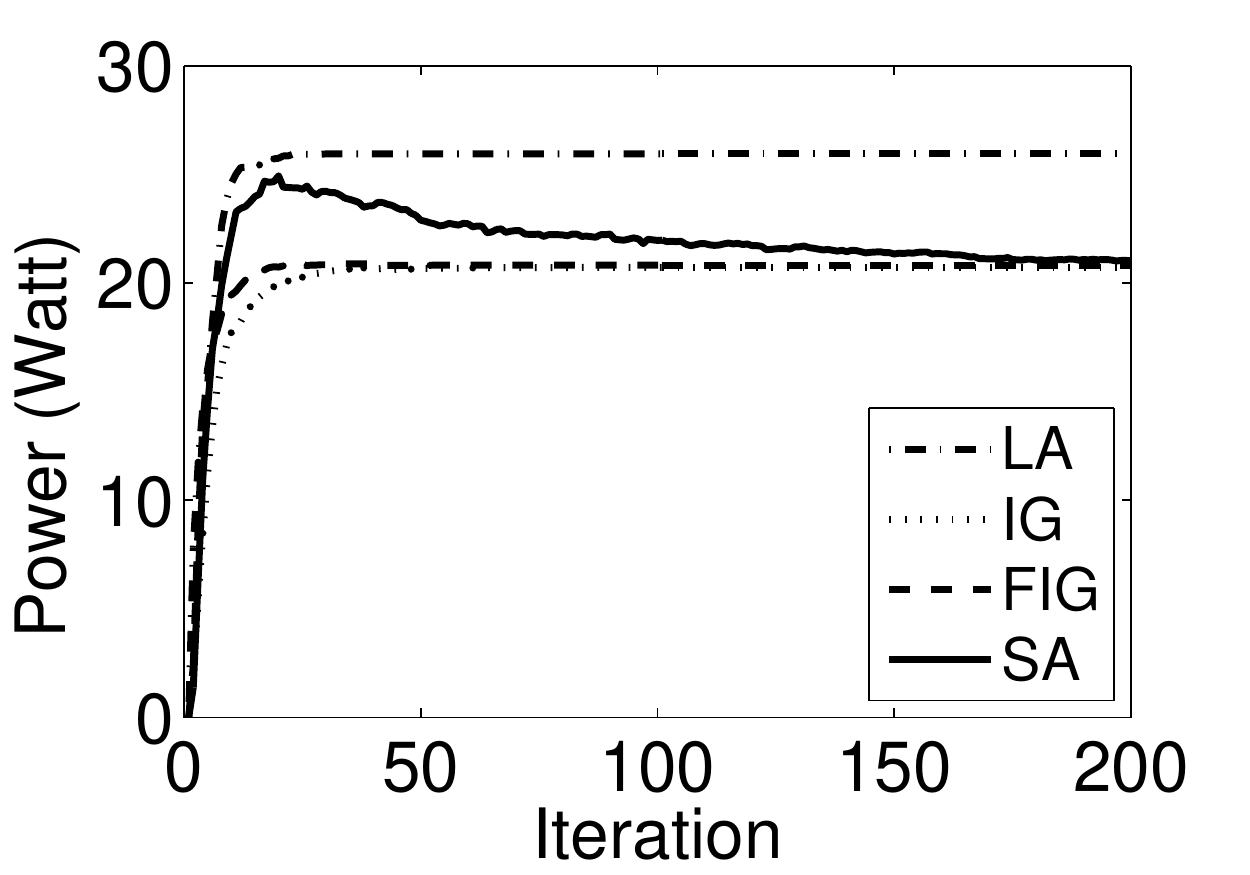}}
\subfigure[Energy Efficiency]{\label{SimpleEfficiency}\includegraphics[width=0.48\columnwidth]{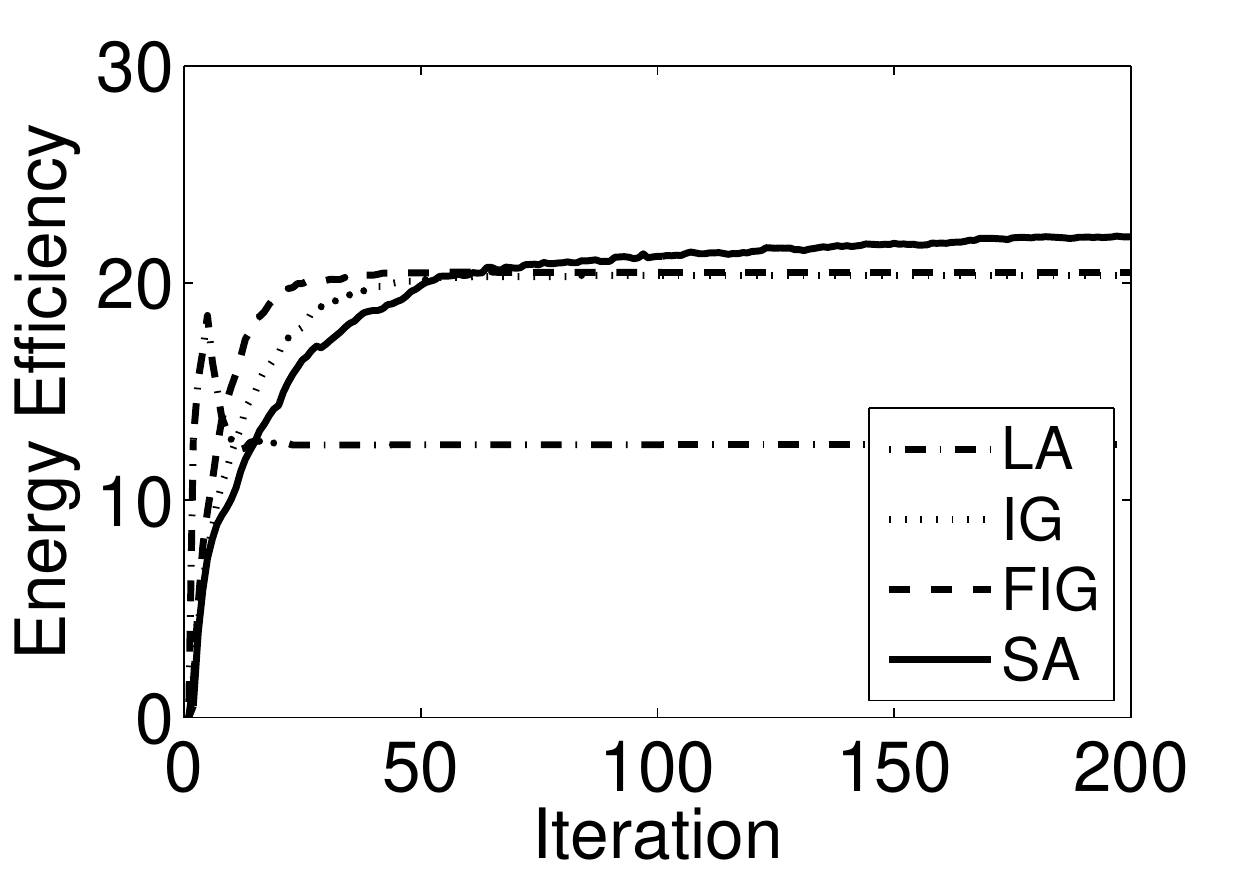}}

\caption{Convergence Analysis}
\label{Convergence}
\end{figure}

\noindent \myitem \textbf{Utility.} The LA algorithm had the highest convergence speed and converged in about 20 iterations as shown in Fig. \ref{SimpleUtility}. Our FIG algorithm converged in less than 30 iterations, which was half the time taken by IG algorithm. The SA algorithm converged slower as it probes an optimal solution using a probabilistic approach. It reached a relatively stable utility in around 150 iterations and kept improving slowly afterwards. 

\noindent \myitem \textbf{Power consumption.} As shown in Fig. \ref{SimplePower}, SA algorithm increased in the first 20 iterations and then drops gradually since the relative benefit of power saving becomes greater than the utility increasing. IG, FIG, and SA algorithms converged at around 20 Watt, indicating that on average only two FBSs were turned on to avoid interference and save energy.

\noindent \myitem \textbf{Energy efficiency.} As shown in Fig. \ref{SimpleEfficiency}, LA algorithm increased abruptly and then followed by a sudden drop, which was caused by the fast increasing power consumption.

Although LA algorithm converged very fast, it received the lowest utility, highest power consumption, and lowest energy efficiency after convergence. IG and FIG algorithms obtained similarly good performance on the three metrics: 30\% higher utility than LA, 20\% less power consumption than LA, and 63\% higher energy efficiency than LA. Compared to the upper bound SA, IG and FIG achieved 8\% lower utility, 7\% lower power consumption, and 6\% lower energy efficiency by the end of the 200th iteration.

In summary, in the simple topology, both IG and FIG algorithms outperform the LA algorithm in utility, power consumption, and energy efficiency. IG and FIG algorithms converged much faster than SA algorithm, in the meanwhile achieved approximately as good performance as SA in all three metrics.

\subsection{Large Scale Topologies}
\begin{table*} \renewcommand{\arraystretch}{1}
\small
\begin{center}
\caption{Performance Metrics Comparison in Large Scale Topology} 
\begin{threeparttable}
	\centering 
	\begin{tabular}{|c|c|c|c|c|c|c|} 
	\hline  
	{\backslashbox{Metric}{Algorithm}  }&{LA}&{IG}&{FIG}&{SA}\\ 
	\hline
	Utility &1263.80$\pm$150.53 & 3152.50$\pm$151.32 & 3171.30$\pm$171.58 & 3456.80$\pm$84.85 \\
	\hline 
	GBR Reject Ratio &59.31\%$\pm$6.07\% & 9.00\%$\pm$5.39\% & 8.42\%$\pm$5.75\% & 2.68\%$\pm$2.87\% \\
	\hline	
	Non-GBR Utility  &205.77$\pm$15.41 & 786.51$\pm$59.14 & 790.29$\pm$65.31 & 926.43$\pm$44.38 \\
	\hline
	Power Consumption (/Watt) &203.76$\pm$0.88 & 213.12$\pm$11.71 & 217.9$\pm$13.33 & 231.94$\pm$7.95 \\
	\hline
	Energy Efficiency &6.20$\pm$0.73 & 14.82$\pm$0.84 & 14.59$\pm$0.85 & 14.92$\pm$0.48 \\
	\hline 
	\end{tabular}
\end{threeparttable}
\label{table_Large} 
\end{center}
\end{table*}

Next, we consider a large scale topology consisting of twenty-five FBSs placed on a five-by-five grid, as shown in Figure \ref{LargeTopology}. In this scenario, each FBS was located at the center of a circle, which represents the transmission range of a FBS. The FBSs overlapped with one another and thereby led to inter-femtocell interference. In each circle, eight UEs were randomly distributed in the circle's inscribed square. Each black cross represents a Non-GBR UE, and each small red circle represents a GBR UE. Two GBR UEs and six Non-GBR UEs were located randomly within every other circle.

Table \ref{table_Large} to Table \ref{table:weight_efficiency} show the averaged results of 100 random UE distributions. For SA algorithm, the metric values were averaged over 100 iterations due to oscillation, while others were captured after convergence. Standard deviation is also reported after the average value in each cell. The table entries are accurate to two decimal places.

\noindent \textbf{Metric Comparison.} Table \ref{table_Large} shows the comparison results on five metrics: utility, power consumption, energy efficiency, GBR reject ratio, and Non-GBR utility. The latter two metrics were designed to evaluate separate profit of GBR and Non-GBR UEs: the GBR reject ratio is defined as the percentage of unsatisfied GBR UEs that should be rejected by the FBSs, while the Non-GBR utility is defined as the sum of utilities of the Non-GBR UEs.

From Table \ref{table_Large}, we observe that in terms of utility, 1) LA algorithm achieved the lowest utility because rough inter-FBS interference mitigation when all FBSs are turned on results in unnecessary bandwidth waste. 2) IG and FIG obtained almost the same utilities as they adopted the same greedy decision updating rules. 3) Both IG and FIG improved the utility value by 150\% over the benchmark LA thanks to the high bandwidth efficiency obtained by intelligent interference mitigation. 4) The utilities of IG and FIG are close to the upper bound benchmark SA (only 10\% lesser).

On other metrics, our IG and FIG algorithms outperformed LA algorithm and obtained similar good results as SA algorithm: 1) The GBR reject ratio of SA is the lowest (3\%). IG and FIG achieved only less than 9\%, which is only ${1}/{6}$ of LA. 2) The Non-GBR utilities of IG and FIG are nearly three times higher compared to LA, with only 14\% drop compared to the SA. 3) IG and FIG consumed more power than LA, and less power than SA. 4) IG and FIG improved the energy efficiency by more than two times compared to LA. IG and FIG have similar energy efficiencies as SA, as they achieved less utility and consumed less power compared to SA. 5) The standard deviations of IG and FIG algorithms are larger than SA, because in greedy strategy different solution searching sequences could result in different results.

\noindent \textbf{Impact of Power Weight.} As stated before, maximizing the bandwidth utilization is at odds with minimizing the gross energy consumption. Therefore, we introduce a power weight as a tuning parameter to balance between utility maximization and power minimization. Below, we investigate how power weight affects the performance of IG, FIG, and SA, respectively.

Specifically, we studied power weight $\omega$ at five different values: 0.0, 0.5, 1.0, 1.5, and 2.0. The results for power consumption, utility, and energy efficiency are listed in Table \ref{table:weight_power} to \ref{table:weight_efficiency}, respectively\footnote{Note that LA is not listed since it is irrelevant to power weight. Please refer to Table \ref{table_Large} for LA's performance.}. 

For power consumption, we can observe from Table \ref{table:weight_power} that overall, the larger the power weight, the higher the power consumption. For instance, the power consumption of both IG and FIG algorithms decreased by 38\% when $\omega$ increased from 0.0 to 2.0, whereas the power consumption of SA only decreased 12\%. This shows that IG and FIG were more sensitive to power weight variation compared to SA in terms of power consumption. 

\begin{table}[!htbp] \renewcommand{\arraystretch}{1}
\small
\begin{center}
\caption{Power Consumption (/Watt) for Different Power Weights $\omega$} 
\centering 
\scalebox{0.86}{
	\begin{tabular}{|c|c|c|c|c|c|c|c|} 	
	\hline  
	{\backslashbox{$\omega$}{Alg.} }&{IG}&{FIG}&{SA}\\ 
	\hline
	0 &	225.27$\pm$9.64 & 225.08$\pm$11.92 & 238.09$\pm$5.58 \\
	\hline 
	0.5 & 216.54$\pm$11.29 & 220.49$\pm$12.99 & 235.44$\pm$6.39 \\
	\hline 
	1.0 & 213.12$\pm$11.71 & 217.90$\pm$13.33 & 231.94$\pm$7.95 \\
	\hline
	1.5 & 140.57$\pm$6.74 & 140.30$\pm$6.89 &  224.96$\pm$9.61 \\
	\hline
	2.0 & 140.55$\pm$6.77 & 140.34$\pm$6.91 & 209.98$\pm$12.39 \\	
	\hline
	\end{tabular}	}
\label{table:weight_power} 
\end{center}
\end{table}

For utility, as listed in Table \ref{table:weight_utility}, in general, the larger the power weight $\omega$, the smaller the utility. The reason is that the algorithms tend to sacrifice utility to save energy when $\omega$ is large. 

\begin{table}[!htbp] \renewcommand{\arraystretch}{1}
\small
\begin{center}
\caption{Utility for Different Power Weights $\omega$} 
\centering 
\scalebox{0.86}{
	\begin{tabular}{|c|c|c|c|c|c|c|c|} 	
	\hline  
	{\backslashbox{$\omega$}{Alg.} }&{IG}&{FIG}&{SA}\\ 
	\hline
	0 &	3144.60$\pm$149.01 & 3194.70$\pm$174.97 & 3463.70$\pm$75.98 \\
	\hline 
	0.5 & 3154.50$\pm$148.13  & 3186.90$\pm$172.86  & 3466.90$\pm$88.52  \\
	\hline 
	1.0 & 3152.5$\pm$1151.32  & 3171.3$\pm$1171.58  & 3456.80$\pm$184.85  \\
	\hline
	1.5 & 2970.80$\pm$148.58 & 2982.70$\pm$167.06  &  3434.00$\pm$105.80  \\
	\hline
	2.0 & 2977.10$\pm$150.90 & 2986.90$\pm$166.49 & 3408.40$\pm$110.00 \\	
	\hline
	\end{tabular}	}
\label{table:weight_utility} 
\end{center}
\end{table}

For energy efficiency, we can see that the larger the power weight $\omega$, the larger the energy efficiency, as shown in Table \ref{table:weight_efficiency}. In particular, when $\omega$ equals 0, 0.5, and 1.0, IG and FIG algorithms achieved lower energy efficiency than SA; whereas when $\omega$ was above 1.0, IG and FIG achieved even higher energy efficiency compared to SA. Therefore, IG and FIG were more energy efficient than SA when the power weight was higher than 1.0.

\begin{table}[!htbp] \renewcommand{\arraystretch}{1}
\small
\begin{center}
\caption{Energy Efficiency for Different Power Weights $\omega$} 
\centering 
\scalebox{0.86}{
	\begin{tabular}{|c|c|c|c|c|c|c|c|} 	
	\hline  
	{\backslashbox{$\omega$}{Alg.} }&{IG}&{FIG}&{SA}\\ 
	\hline
	0 &	13.98$\pm$0.71 & 14.23$\pm$0.81  & 14.56$\pm$0.45  \\
	\hline 
	0.5 & 14.59$\pm$0.76   & 14.48$\pm$0.80   & 14.73$\pm$0.4  \\
	\hline 
	1.0 & 14.82$\pm$0.84   & 14.59$\pm$0.85   & 14.92$\pm$0.48   \\
	\hline
	1.5 & 21.15$\pm$0.86  & 21.27$\pm$0.95  &  15.28$\pm$0.56   \\
	\hline
	2.0 & 21.20$\pm$0.85 & 21.30$\pm$0.96 & 16.27$\pm$0.83 \\	
	\hline
	\end{tabular}	}
\label{table:weight_efficiency} 
\end{center}
\end{table}

\section{Conclusion}\label{conclution}

In this paper, we studied the problem of improving both bandwidth efficiency and energy efficiency of LTE femtocell networks by taking into consideration interference mitigation and the QoS requirements. The problem is formulated as a joint optimization of UE association and OFDMA scheduling in the framework of a practical MAC protocol. Two distributed algorithms are proposed to approximately solve the problem. Extensive simulation results highlight the considerable improvements achieved by our algorithms compared to the benchmark algorithms in various performance metrics.

\bibliographystyle{abbrv}

\bibliography{reference}

\end{document}